\documentclass[11pt]{article}

\usepackage{latexsym}
\usepackage{amsmath}
\usepackage{amssymb}

\usepackage{graphicx}

\newtheorem{theorem}{Theorem}
\newtheorem{proposition}{Proposition}

\newtheorem{definition}{Definition}

\title{Whittle index approach to multi-server scheduling \\ 
with convex delay costs and impatient customers}

\author{Samuli Aalto \\ 
Department of Information and Communications Engineering \\
Aalto University, Finland
}

\begin{document}

\date{}

\maketitle

\begin{abstract}

We consider the dynamic scheduling problem in a multi-class M/G/$N$ + M 
queue with convex delay costs and impatient customers that have exponential 
abandonment times. We apply the Whittle index approach to find a reasonable 
heuristic solution for this tricky problem. By assuming exponential 
abandonment times, we are able to make a very straightforward use of the 
results of the corresponding problem with patient customers presented in 
\cite{Aal26QS}. Our main theoretical achievements are proving that the 
closed version of the corresponding discrete-time problem is indexable 
and deriving an explicit expression for the Whittle index. The discrete-time 
results are utilized to develop the Whittle index policy for the original 
continuous-time scheduling problem.
\end{abstract}

\newpage

\section{Introduction}
\label{sec:intro}

We consider the dynamic scheduling problem in a multi-class M/G/$N$ + M queue 
with convex delay costs and impatient customers. We are interested in 
optimizing the scheduling policy among all non-anticipating\footnote{
A non-anticipating scheduling policy is aware of the attained services and 
the times that the customers have already spent in the system, but it does 
not have any knowledge of the remaining service times or the remaining 
abandonment times of the customers.} 
policies that allow preemption with an objective function that takes 
into account both the holding costs of the customers in the system 
and the abandonment penalties related to the customers that leave the 
system before their service has been completed. 
More precisely said, the holding costs for a class-$k$ customer that has 
spent $x$ time units in the system accrue with rate $\gamma_k(x)$, where 
$\gamma_k(x)$ is an increasing\footnote{
In this paper, we use the terms `increasing' and `decreasing' in their weak 
forms. Thus, e.g., an increasing function is not required to be strictly 
increasing.} 
function of $x$. Consequently, the accumulated holding costs, 
$\int_0^x \gamma_k(t) \, \mathrm{d}t$, compose a convexly increasing function 
of delay $x$.\footnote{
Note that the {\em time-convex} holding costs introduced in \cite{Mie95AAP} 
and studied also in the present paper differ from the {\em count-convex} 
holding costs, for which the aggregate holding cost rate related to 
{class-$k$} customers increases convexly as a function of the current number 
of {class-$k$} customers in the system. The optimal scheduling problem with 
the latter type of holding costs and impatient customers have been studied, 
e.g., in \cite{Lar15QS,Lon20OpnsRes}.}
In addition to the holding costs, a $d_k$ penalty fee (lump sum) 
is paid if the customer leaves due to abandonment.

If there is just a single server, the holding costs are linearly 
increasing (i.e., $\gamma_k(t) = c_k$ for all $t$), and the customers are 
patient, the well-known $c\mu$ rule \cite{Cox61} is known to be optimal 
among those non-anticipating scheduling policies that are non-preemptive, 
while the Gittins index rule \cite{Git89,Aal09QS,Aal11PEIS} is the best 
non-anticipating policy if preemption is allowed. For time-convex holding 
costs, van Mieghem~\cite{Mie95AAP} launched and studied the generalized 
$c\mu$ rule, which is a preemptive index policy serving always the customer 
with highest index $\gamma_k(x_i) \mu_k$, where $k$ refers to the class of 
customer~$i$, $x_i$ to the current total time that customer~$i$ has already 
spent in the system, and $\mu_k$ is the inverse of the mean service time for 
class-$k$ customers. He proved that the generalized $c\mu$ rule is 
asymptotically optimal in the heavy traffic limit for a single-server system. 
Mandelbaum and Stolyar \cite{Man04OpnsRes} generalized this result for 
multi-server systems.

While the generalized $c\mu$ rule is a near-optimal scheduling policy for 
patient customers under heavy traffic, it can still be improved in normal 
traffic conditions. This is demonstrated in \cite{Aal26QS}, where we 
applied the Whittle index approach \cite{Whi88JAP} to develop an index 
policy for patient customers. The developed Whittle index policy is always 
preemptively serving the customer with the highest index 
$W^\circ_k(x_i,y_i)$ defined by 
\begin{equation}
W^\circ_k(x_i,y_i) = 
\sup_{\Delta > 0} 
\frac{E[\gamma_k(x_i + S_i - y_i) \, 1_{\{S_i - y_i \le \Delta\}} \mid S_i > y_i]}
{E[\min\{S_i - y_i,\Delta\} \mid S_i > y_i]}, 
\label{eq:Whittle-index-patient-intro}
\end{equation}
where $k$ denotes the class of customer~$i$, $x_i$ the current total time 
that customer~$i$ has already spent in the system, $y_i$ the amount of 
service that customer~$i$ has attained until now, $S_i$ its total service 
time (still unknown for the scheduler), and $1_A$ the indicator function of 
event~$A$. Thus, the developed Whittle index policy utilizes information 
not only on the time that a customer has already spent in the system 
(as the generalized $c\mu$ rule) but also (and unlike the generalized 
$c\mu$ rule) on the attained service.

Ata and Tongarlak \cite{Ata13QS} considered the corresponding scheduling 
problem with time-convex holding costs and impatient customers, however, 
restricting themselves to the M/M/1 + M case. By studying the approximating 
Brownian control problem and solving the associated Bellman equations, they 
were able to produce index policies for their original scheduling problem. 

In this paper, we apply the Whittle index approach for the dynamic scheduling 
problem in a multi-class M/G/$N$ + M queue with time-convex holding costs and 
impatient customers that have generally distributed service times and 
exponentially distributed abandonment times. By utilizing the results in 
\cite{Aal26QS} derived for patient customers, we develop an index policy for impatient customers. The related Whittle index is given by 
\begin{equation}
\begin{split}
& 
W_k(x_i,y_i) = \sup_{\Delta > 0} 
\frac{1}{\theta_k} 
\frac{E\left[ 
\Big( \gamma_k(x_i + S_i - y_i) + d_k \theta_k \Big) 
1_{\{S_i - y_i < \min\{D^{\mathrm R}_i,\Delta\}\}}
\mid S_i > y_i \right]}
{E[\min\{D^{\mathrm R}_i,S_i - y_i,\Delta\} \mid S_i > y_i]}, 
\end{split}
\label{eq:Whittle-index-impatient-intro}
\end{equation}
where $k$ denotes the class of customer~$i$, $x_i$ the current total time 
that customer~$i$ has already spent in the system, $y_i$ its attained service 
until now, $S_i$ its total service time, $\theta_k$ the abandonment intensity 
for class~$k$, and $D^{\mathrm R}_i$ the remaining abandonment time of the 
customer, which is, due to the memoryless property of the exponential 
distribution, also exponentially distributed.

The rest of the paper is organized as follows. 
In Section~\ref{sec:discrete-discounted-problem}, we present the closed 
version\footnote{
Following \cite{Whi81AnnProb,Whi88JAP}, we call the dynamic setup with 
multi-class Poisson arrivals as the {\em open version} of the scheduling 
problem, while the {\em closed version} refers to the transient case where 
there is a finite number of heterogeneous customers in the system at time~$0$ 
and no new arrivals later on.}
of the discrete-time scheduling problem as a restless bandit problem. 
The relaxed version of the problem is introduced in 
Section~\ref{sec:relaxed-problem}, and the relaxed problem is shown to be 
indexable and the corresponding Whittle index is derived explicitly in 
Section~\ref{sec:Whittle-index-discrete}. Next, 
in Section~\ref{sec:whittle-index-continuous}, we move from the discrete-time 
setup back to the original scheduling problem in continuous time and give 
an explicit characterization of the Whittle index and define the Whittle 
index policy in this setting. 
The paper is concluded in Section~\ref{sec:conc}.

\section{Scheduling problem in discrete time with discounted costs}
\label{sec:discrete-discounted-problem}

We consider the following multi-server optimal scheduling problem in discrete 
time. Assume that there are $N$ homogeneous parallel servers. Let the time 
slots be indexed by $t \in \{1,2,\ldots\}$. At the beginning of the first time 
slot, there are $K$ customers, and no new customers arrive later on.

Let $S_k$ denote the random service time of customer~$k \in \{1,2,\ldots\,K\}$, 
which is assumed to be independent (of everything else). Let $\mu_k(n)$, 
$n \in \{1,2,\ldots\}$, denote the conditional probability that the service 
time of customer~$k$ is equal to $n$, given that it is strictly greater than 
$n - 1$, 
\begin{equation}
\mu_k(n) = P\{ S_k = n \mid S_k > n - 1 \}.
\label{eq:mu-definition}
\end{equation}
The service time distribution of customer~$k$ is otherwise general but, 
for mathematical simplicity, we assume that there is a finite value 
$N^\mu_k \in \{1,2,\ldots\}$ such that 
\begin{equation}
\mu_k(N^\mu_k) = 1.
\label{eq:math-simplicity-assumption-mu}
\end{equation}
Thus, $N^\mu_k$ denotes the maximum service time.

Let $D_k$ denote the random abandonment time of customer $k$, i.e., 
the time interval from the beginning until the customer leaves the 
system due to abandonment (unless its service has already been completed 
prior to that moment).  We assume here that $D_k$ is independently and 
geometrically distributed taking values in $\{1,2,\ldots\}$ with a fixed 
abandonment probability $\theta_k > 0$.

Customers are served according to a non-anticipating scheduling policy $\pi$ 
that allows preemptions. Let $\Pi$ denote the family of such disciplines. 
At the beginning of each time slot $t \in \{1,2,\ldots\}$, the scheduler 
chooses at most $N$ customers for service (during that time slot). Denote 
$A^\pi_k(t) = 1$ if customer~$k$ is chosen for service at the beginning of 
time slot $t$; otherwise let $A^\pi_k(t) = 0$. Thus, for any policy $\pi$ 
and time slot $t$, we have the following constraint: 
\begin{equation}
\sum_{k = 1}^K A^\pi_k(t) \le N.
\label{eq:capacity-constraint}
\end{equation}

Let $Y^\pi_k(t) \in \{0,1,\ldots,S_k-1\}$ denote the number of time slots 
that customer~$k$ has been served prior to time slot $t$ (i.e., its attained 
service) given that the customer is still in the system at the beginning of 
time slot $t$. Customer~$k$ leaves the system due to abandonment or service 
completion. If customer~$k$ is no longer in the system at the beginning of 
time slot $t$, we use symbol `$*$' and denote $Y^\pi_k(t) = *$. As for 
a possible abandonment, we assume that customer~$k$ leaves the system due to 
abandonment in time slot $t$ if the service of the customer has not been 
completed prior to time slot $t$ and $D_k = t$. Therefore, even if 
customer~$k$ was scheduled at the beginning of time slot $t$, it leaves 
the system due to abandonment if $D_k = t$.

Given that the customer is still in the system at the beginning of 
time slot~$t$, its holding costs accumulate at rate $\gamma_k(t)$ in that 
time slot. We assume that $\gamma_k(t)$ is a positive and increasing 
function of $t$, 
\begin{equation}
0 < \gamma_k(1) \le \gamma_k(2) \le \gamma_k(3) \le \cdots \; .
\label{eq:gamma-monotonicity}
\end{equation}
For mathematical simplicity, we further assume that there is a finite value 
$N^\gamma_k \in \{1,2,\ldots\}$ such that 
\begin{equation}
N^\gamma_k > N^\mu_k, \quad 
\gamma_k(N^\gamma_k) = \gamma_k(N^\gamma_k+1) = \gamma_k(N^\gamma_k+2) = 
\cdots \; .
\label{eq:math-simplicity-assumption-gamma}
\end{equation}

In addition to the holding costs, a penalty cost of size $d_k$ is incurred 
(as a lump sum) if customer~$k$ leaves the system due to abandonment. 
When the customer has left the system, no more costs accumulate any longer.

All costs are discounted in time by factor $\beta \in (0,1)$. The objective 
function in our scheduling problem is, thus, given by 
\begin{equation}
\left[ 
\sum_{t = 1}^\infty \sum_{k = 1}^K 
\beta^{t-1} \big( \gamma_k(t)  + d_k 1_{\{D_k = t \}} \big) 1_{\{Y^\pi_k(t) \ne * \}} 
\right].
\label{eq:expected-discrete-discounted-costs}
\end{equation}
The aim is to find the optimal scheduling policy $\pi \in \Pi$ that minimizes 
the expected discounted costs (\ref{eq:expected-discrete-discounted-costs}) 
subject to the capacity constraint (\ref{eq:capacity-constraint}) for all 
time slots $t$.

\section{Whittle index approach to the discrete-time problem}
\label{sec:relaxed-problem}

The discrete-time optimization problem described in the previous section 
belongs to the class of restless bandit problems, which are known to be 
mathematically intractable. In \cite{Whi88JAP}, Whittle proposed to relax 
such problems by first replacing the capacity constraint for all time slots 
by a time-averaged one, and then considering the Lagrangian version of the 
relaxed problem, which makes the problem separable. If the relaxed problem 
can be proved to be indexable, this leads to the so-called Whittle index rule, 
which solves the relaxed problem, and serves as a reasonable heuristic for the 
original restless bandit problem.

By applying the above-mentioned Whittle's approach, we get the following 
(separate) subproblems for each customer~$k$: 
Find the optimal policy~$\pi$ that minimizes the objective function 
\begin{equation}
f_{k,\beta}^{\pi} + \nu g_{k,\beta}^{\pi}, 
\label{eq:separable-discrete-discounted-costs}
\end{equation}
where $\nu$ can be interpreted as the unit price of work, $f_{k,\beta}^{\pi}$ 
as the expected discounted costs of customer~$k$, and $g_{k,\beta}^{\pi}$ 
as the expected discounted amount of work allocated to customer~$k$, 
\[
\begin{split}
& 
f_{k,\beta}^{\pi} = 
E\left[ 
\sum_{t = 1}^\infty 
\beta^{t-1} \big( \gamma_k(t)  + d_k 1_{\{D_k = t \}} \big) 1_{\{Y^\pi_k(t) \ne * \}} \right], \\
& 
g_{k,\beta}^{\pi} = 
E\left[ 
\sum_{t = 1}^\infty 
\beta^{t-1} A_k^\pi(t) \right].
\end{split}
\]
The separable subproblems (\ref{eq:separable-discrete-discounted-costs}) are 
considered in the context of Markov decision processes. The possible actions 
$a_k \in {\mathcal A} = \{0,1\}$ are ``to schedule'' ($a_k = 1$) and 
``not to schedule'' ($a_k = 0$).

Consider any customer~$k$. If the customer is still in the system, its state 
is described by the pair $(x,y)$, $x \ge y$, where $x$ refers to the total 
time that the customer has already spent in the system and $y$ to the amount 
of service that the customer has already attained when the decision is 
made.\footnote{
We assume that the decisions are made at the beginning of the time slot and 
customers leave the system at the end of the time slot in which their 
service requirement is completed. Thus, $x \in \{0,1,\ldots\}$ and 
$y \in \{0,1,\ldots,\min\{x,N^\mu_k-1\}\}$.}
If the customer is no longer in the system, the state is briefly denoted 
by symbol `$*$'. The set of all possible states is denoted by 
\[
{\mathcal S}_k = \{(x,y): x \in \{0,1,\ldots\}, y \in \{0,1,\ldots,\min\{x,N^\mu_k-1\}\}\} \cup \{*\}
\]
Let $q_k(\cdot|\cdot;a) \ge 0$ denote the corresponding state 
transition probability under action $a$. The non-zero state transition 
probabilities in our model are as follows: 
\begin{equation}
\begin{array}{ll}
q_k(x+1,y|x,y;0)   &= 1 - \theta_k, \\
q_k(x+1,y+1|x,y;1) &= (1 - \theta_k)(1-\mu_k(y+1)), \\
q_k(*|x,y;1)       &= \theta_k + (1 - \theta_k) \mu_k(y+1), \\
q_k(*|*,a)         &= 1, \quad a \in {\mathcal A}.
\end{array}
\label{eq:transition-probabilities-n}
\end{equation}
Note that the state $*$ is absorbing for any policy. Let then $c_k(z;a)$ 
denote the expected immediate costs in state~$z$ after action~$a$. 
In our model, for any $a \in {\mathcal A}$, 
\begin{equation}
\begin{array}{ll}
c_k(x,y;a) &= \gamma_k(x+1) + \theta_k d_k + a \nu, \\
c_k(*;a)   &= a \nu.
\end{array}
\label{eq:immediate-costs}
\end{equation}

Let $Z_k^\pi(t)$ denote the state of customer~$k$ at the beginning of time 
slot~$t$ under policy~$\pi$ and $V_{k,\beta}^{\pi}(z;\nu)$ the expected total 
discounted costs for this policy with a fixed initial state 
$z \in {\mathcal S}_k$: 
\begin{equation}
V_{k,\beta}^{\pi}(z;\nu) = 
E\left[ 
\sum_{t = 1}^\infty 
\beta^{t-1} \left( \big( \gamma_k(t)  + d_k 1_{\{D_k = t \}} \big) 1_{\{Z_k^\pi(t) \ne * \}} + \nu A_k^\pi(t) \right) 
\middle| Z_k^\pi(1) = z \right].
\label{eq:V-pi-function}
\end{equation}
In addition, define, for any state $z \in {\mathcal S}_k$, 
\begin{equation}
V_{k,\beta}(z;\nu) = \inf_\pi V_{k,\beta}^{\pi}(z;\nu), 
\label{eq:V-function}
\end{equation}
which is also called the value function of the problem. 
The policy $\pi_k^*$ for which 
\[
V_{k,\beta}^{\pi_k^*}(\cdot;\nu) = V_{k,\beta}(\cdot;\nu) 
\]
is said to be $(\nu,\beta)$-optimal for customer~$k$.

Since the state space~${\mathcal S}_k$ is discrete, the action 
space~${\mathcal A}$ is finite, and the immediate costs are bounded, 
the optimal policy belongs to the class of stationary policies 
\cite[Thm.~6.3]{Ros70}, which is denoted by $\Pi$. For each stationary 
policy $\pi \in \Pi$, the decisions $A_k^{\pi}(t)$ related to customer~$k$ 
at time~$t$ are deterministic depending just on the current state 
$Z_k^{\pi}(t)$ of the customer. In other words, each stationary policy 
$\pi$ is uniquely described by an activity set ${\mathcal B}_k^\pi$ 
that consists of the states in which customer~$k$ is scheduled to service. 
In addition, by \cite[Thm.~6.1]{Ros70}, we know that the value function 
$V_{k,\beta}(\cdot;\nu)$ satisfies the optimality equations: 
\begin{equation}
\begin{split}
& 
V_{k,\beta}(*;\nu) = 
\min\{ 0, \nu \} + \beta V_{k,\beta}(*;\nu), \\
& 
V_{k,\beta}(x,y;\nu) = 
\gamma_k(x+1) + \theta_k \big( d_k + \beta V_{k,\beta}(*;\nu) \big) \; + \\
& \quad 
\min \Big\{ 
\beta (1 - \theta_k) V_{k,\beta}(x+1,y;\nu), \\
& \quad \quad 
\nu + 
\beta (1 - \theta_k) \Big( 
\mu_k(y+1) V_{k,\beta}(*;\nu) + 
(1 - \mu_k(y+1)) V_{k,\beta}(x+1,y+1;\nu) 
\Big)
\Big\}.
\end{split}
\label{eq:opt-eqs-discrete-discounted-gen}
\end{equation}

Let us conclude this section by defining the indexability property, 
which is not automatically guaranteed for genuine restless bandit problems 
\cite{Whi88JAP}.

\begin{definition}
\label{def:indexability}
The relaxed optimization problem (\ref{eq:separable-discrete-discounted-costs}) 
related to customer~$k$ is indexable if, for any state 
$z \in {\mathcal S}_k$, 
there exists $W_{k,\beta}(z) \in [-\infty,\infty]$ such that 
\begin{itemize}
\item[(i)]
decision $a = 1$ (to schedule customer~$k$) is optimal in state~$z$ 
if and only if $\nu \le W_{k,\beta}(z)$; 
\item[(ii)]
decision $a = 0$ (not to schedule customer~$k$) is optimal in state~$z$ 
if and only if $\nu \ge W_{k,\beta}(z)$.
\end{itemize}
If the problem is indexable, the corresponding index $W_{k,\beta}(z)$ is 
called the Whittle index.
\end{definition}

Note that, according to this definition, the two actions are equally good 
(and, thus, optimal) in state $z$ if and only if $\nu = W_{k,\beta}(z)$.

\section{Whittle index for the discrete-time problem}
\label{sec:Whittle-index-discrete}

In this section, we solve the relaxed optimization problem with 
objective function (\ref{eq:separable-discrete-discounted-costs}). 
More precisely said, we prove that the relaxed problem is indexable 
and derive the corresponding Whittle index in 
Theorem~\ref{thm:Whittle-index-discrete-discounted}. 
Since we consider a single fixed customer~$k$, we leave out the related 
subscript~$k$ to lighten the notation.

\begin{theorem}
\label{thm:Whittle-index-discrete-discounted}
The relaxed optimization problem with objective function 
(\ref{eq:separable-discrete-discounted-costs}) is indexable, and the 
corresponding Whittle indexes are given by the following formulas: 
\begin{equation}
\begin{split}
& 
W_\beta(*) = 0, \\
& 
W_\beta(x,y) = 
\max \{ J_\beta(x,y;\Delta): \Delta \in \{1,\ldots,N^\mu-y\} \}, 
\quad 
(x,y) \in {\mathcal S} \setminus \{*\}, 
\end{split}
\label{eq:Whittle-index-discrete-discounted}
\end{equation}
where $J_\beta(x,y;\Delta)$ is defined by 
\begin{equation}
J_\beta(x,y;\Delta) = 
\frac{
\sum_{i=1}^{\Delta} \big( \beta (1 - \theta) \big)^{i-1} 
\bar \mu(y,i-1) \mu(y+i) \big( \gamma(x+i+1) + \theta d \big) }
{\sum_{i=1}^{\Delta} \big( \beta (1 - \theta) \big)^{i-1} \bar \mu(y,i-1)} \, 
\frac{\beta (1 - \theta)}{1 - \beta (1 - \theta)}, 
\label{eq:J-function-discounted}
\end{equation}
and the short-hand-notation $\bar \mu(y,i)$ by 
\[
\bar \mu(y,i) = \prod_{j=1}^i (1-\mu(y+j)).
\]
\end{theorem}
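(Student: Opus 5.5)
The plan is to reduce the impatient problem to the patient one treated in \cite{Aal26QS}. Abandonment is geometric and independent of everything else. In state $(x,y)$ the customer survives each slot with probability $1-\theta$, whatever action is taken. So the continuation part of the optimality equations (\ref{eq:opt-eqs-discrete-discounted-gen}) has exactly the patient form, with the discount factor replaced by $\tilde\beta = \beta(1-\theta)$ and the per-slot cost replaced by $\tilde\gamma(x+1) = \gamma(x+1)+\theta d$.

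For $W_\beta(*)$ the argument is immediate. In state $*$ the only cost is $a\nu$, so $a=1$ is optimal iff $\nu\le 0$ and $a=0$ is optimal iff $\nu\ge 0$. Hence $W_\beta(*)=0$.

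For $\nu \ge 0$ we have $V_\beta(*;\nu)=0$. Then (\ref{eq:opt-eqs-discrete-discounted-gen}) is literally the patient optimality equation with parameters $(\tilde\beta,\tilde\gamma)$. Here $\tilde\gamma$ is still positive, increasing and eventually constant, so assumptions (\ref{eq:gamma-monotonicity})--(\ref{eq:math-simplicity-assumption-gamma}) carry over.

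For $\nu<0$ the two actions in state $*$ differ, and $V_\beta(*;\nu)=\nu/(1-\beta)$. I will show directly that scheduling is optimal in every state. The scheduling branch of the minimum beats idling if
\[
\nu + \beta(1-\theta)\bigl(\mu V(*)+(1-\mu)V(x+1,y+1)\bigr) \le \beta(1-\theta)V(x+1,y).
\]
To prove it, compare the two continuations by a sample-path coupling. Finishing service earlier only removes holding costs and keeps the per-slot $\nu$ gains available from the absorbed state, since absorbed customers still collect $\nu<0$ per slot. In particular, $V(*;\nu)\le V(x',y';\nu)$ for every non-absorbed state when $\nu<0$. I also need $V(x+1,y+1)\le V(x+1,y)$, i.e. more attained service is better; this follows by letting the state with more attained service mimic the other and using the monotonicity of $\gamma$. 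Since the paper's $W_\beta(x,y)$ is strictly positive (it is built from positive $\gamma$), this case is consistent with the index formula.

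So it suffices to treat $\nu\ge0$. For this case I will invoke the patient-customer theorem of \cite{Aal26QS}. It states indexability with index
\[
\max_{\Delta} \frac{\sum_{i=1}^{\Delta}\tilde\beta^{i-1}\bar\mu(y,i-1)\mu(y+i)\,\tilde\gamma(x+i+1)}{\sum_{i=1}^{\Delta}\tilde\beta^{i-1}\bar\mu(y,i-1)}\cdot\frac{\tilde\beta}{1-\tilde\beta},
\]
which is exactly $J_\beta$ after substitution.

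If that result is not directly citable in this generality, I will reprove it. The idea is that an optimal policy, once it starts serving, should serve a $\Delta$-block. Let $V^{\text{idle}}$ be the value of idling forever. Since $\tilde\gamma$ increases, leaving a customer waiting only gets costlier, so the optimal policy has a threshold structure in $x$ for each $y$. The structure is: serve now for some $\Delta$ steps, or idle forever. Idling forever from $(x,y)$ costs $\sum_{t\ge0}\tilde\beta^t\tilde\gamma(x+t+1)$. Serving for $\Delta$ slots and then idling costs $\nu$ times the expected discounted number of served slots, plus holding costs. The difference between the two is linear in $\nu$. The plan is to check that the break-even $\nu$ for a given $\Delta$ is $J_\beta(x,y;\Delta)$, and then maximize over $\Delta$.

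The main obstacle is justifying this "serve a block or idle forever" reduction. That is, I must show that deferring service and then serving later is never strictly better than the better of the two extremes, so that the set of $\nu$ for which $a=1$ is optimal is exactly $\{\nu\le\max_\Delta J\}$. I will try an interchange argument: any policy that idles and then serves is dominated by shifting the service earlier, which is valid because $\tilde\gamma$ is increasing. Monotonicity of the activity set in $\nu$ then yields indexability.
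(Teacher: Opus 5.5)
Your overall plan matches the paper's: for $\nu\ge0$ you reduce to the patient problem with $\tilde\beta=\beta(1-\theta)$, $\tilde\gamma=\gamma+\theta d$ and cite \cite{Aal26QS}, and for $\nu<0$ you show that scheduling in every state is optimal.

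The genuine gap is in the $\nu<0$ step. Your lemma $V_\beta(x+1,y+1;\nu)\le V_\beta(x+1,y;\nu)$ (``more attained service is better'') is false for general service times. Mimicking would need $[S-y-1\mid S>y+1]$ to be stochastically smaller than $[S-y\mid S>y]$. That is an aging property (e.g.\ IHR), and the theorem does not assume it. For example, take $S\in\{1,3\}$ with $P\{S=1\}=p$, i.e.\ $\mu(1)=p$, $\mu(2)=0$, $\mu(3)=1$, and let $\nu<0$:
\begin{itemize}
\item every policy from $(x+1,1)$ costs at least $\nu/(1-\beta)+\tilde\gamma(x+2)+\tilde\beta\tilde\gamma(x+3)$;
\item always scheduling from $(x+1,0)$ costs $\nu/(1-\beta)+\tilde\gamma(x+2)+(1-p)\bigl(\tilde\beta\tilde\gamma(x+3)+\tilde\beta^2\tilde\gamma(x+4)\bigr)$.
\end{itemize}
So your lemma fails for $p$ near $1$.

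What you actually need is only the mixture inequality
\[
\mu(y+1)V_\beta(*;\nu)+(1-\mu(y+1))V_\beta(x+1,y+1;\nu)\le V_\beta(x+1,y;\nu).
\]
Your intuition that finishing earlier only removes holding costs does prove this, provided you couple on the same realization of $(S,D)$ rather than comparing the two states. The left side is the law of the state obtained from $(x+1,y)$ by one extra slot of service without time advancing. A customer that mimics any policy from $(x+1,y)$ (and is scheduled whenever it sits in $*$) leaves no later and is scheduled in at least the same slots, which only helps when $\nu<0$. A simpler route: for $\nu\le0$, every policy has work cost at least $\nu/(1-\beta)$ and pathwise holding cost at least that of always scheduling. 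The paper instead computes the always-schedule value in closed form. It then checks that the Bellman inequality reduces to $\nu\le\tilde\beta\sum_{i=1}^{N^\mu-y}\tilde\beta^{i-1}\bar\mu(y,i-1)\mu(y+i)\tilde\gamma(x+i+1)$, whose right side is positive.

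Your fallback for $\nu\ge0$ would also fail. The break-even between ``serve $\Delta$ slots, then idle forever'' and ``idle forever'' is
\[
\frac{\sum_{i=1}^{\Delta}\bar\mu(y,i-1)\mu(y+i)\sum_{t\ge i}\tilde\beta^{t}\tilde\gamma(x+t+1)}{\sum_{i=1}^{\Delta}\tilde\beta^{i-1}\bar\mu(y,i-1)}.
\]
This is strictly larger than $J_\beta(x,y;\Delta)$ unless $\tilde\gamma$ is flat beyond $x+i+1$. $J_\beta$ freezes the rate at $\tilde\gamma(x+i+1)$, which is where the factor $\tilde\beta/(1-\tilde\beta)$ comes from; it is the break-even between serving now and postponing service by one slot.

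Your interchange claim is also false, consistent with the threshold-in-$x$ structure you note yourself. Take $N^\mu=1$, $\tilde\beta=1/2$, $\tilde\gamma(x+2)=1$ and $\tilde\gamma(x+t)=3$ for $t\ge3$, with $\nu=3/2$:
\begin{itemize}
\item idling at $(x,0)$, then serving at $(x+1,0)$, costs $\tilde\gamma(x+1)+5/4$;
\item serving now costs $\tilde\gamma(x+1)+3/2$;
\item idling forever costs $\tilde\gamma(x+1)+2$.
\end{itemize}
The true index at $(x,0)$ is $1=J_\beta(x,0;1)$, while your ``serve a block or idle forever'' dichotomy would give $2$. The paper itself simply cites \cite{Aal26QS} for $\nu\ge0$, so this matters only if you cannot do the same.
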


\paragraph{Proof} 
The main proof is given below in two parts ($1^\circ$ and $2^\circ$). 
In part $1^\circ$, we consider the non-negative values of $\nu$ 
($\nu \ge 0$). In this case, it is easy to see from the optimality 
equations~(\ref{eq:opt-eqs-discrete-discounted-gen}) that the minimum 
expected discounted cost $V_\beta(*;\nu)$ for the absorbing state $*$ 
clearly equals~$0$ and is achieved by the policies $\pi$ that choose 
action $0$ in state $*$. It follows that, for any $\nu \ge 0$, the 
optimality equations~(\ref{eq:opt-eqs-discrete-discounted-gen}) for 
the remaining states $(x,y) \in {\mathcal S} \setminus \{*\}$ read as 
follows: 
\begin{equation}
\begin{split}
& 
V_\beta(x,y;\nu) = 
\gamma(x+1) + \theta d + 
\min \Big\{ 
\beta (1 - \theta) V_\beta(x+1,y;\nu), \\
& \quad 
\nu + 
\beta (1 - \theta)(1 - \mu(y+1)) V_\beta(x+1,y+1;\nu) 
\Big\}.
\end{split}
\label{eq:opt-eqs-discrete-discounted-nu-pos}
\end{equation}

In the second part ($2^\circ$), we consider the non-positive values 
of $\nu$ ($\nu \le 0$). In this case, the minimum expected discounted 
cost $V_\beta(*;\nu)$ for the absorbing state $*$ clearly equals 
$\nu/(1 - \beta)$ and is achieved by those policies $\pi$ that choose 
action $1$ in state $*$. Moreover, for any $\nu \le 0$, the optimality 
equations~(\ref{eq:opt-eqs-discrete-discounted-gen}) for the remaining 
states $(x,y) \in {\mathcal S} \setminus \{*\}$ read as follows: 
\begin{equation}
\begin{split}
& 
V_\beta(x,y;\nu) = 
\gamma(x+1) + \theta \Big( d + \frac{\beta \, \nu}{1 - \beta} \Big) + 
\min \Big\{ 
\beta (1 - \theta) V_\beta(x+1,y;\nu), \\
& \quad 
\nu + 
\beta (1 - \theta) \Big( 
\mu(y+1) \frac{\nu}{1 - \beta} + 
(1 - \mu(y+1)) V_\beta(x+1,y+1;\nu) 
\Big)
\Big\}.
\end{split}
\label{eq:opt-eqs-discrete-discounted-nu-neg}
\end{equation}

Now we are ready to go through the main proof.

\vskip 6pt
$1^\circ$ 
Let $\nu \in [0,\infty)$. 
By defining $\tilde \gamma(x) = \gamma(x) + \theta d$ and 
$\tilde \beta = \beta (1 - \theta)$, the optimality equation 
(\ref{eq:opt-eqs-discrete-discounted-nu-pos}) is given by 
\begin{equation}
\begin{split}
& 
V_\beta(x,y;\nu) = 
\tilde \gamma(x+1) + 
\min \Big\{ 
\tilde \beta V_\beta(x+1,y;\nu), \\
& \quad 
\nu + 
\tilde \beta (1 - \mu(y+1)) V_\beta(x+1,y+1;\nu) 
\Big\}.
\end{split}
\label{eq:opt-eqs-discrete-discounted-nu-pos-tilde}
\end{equation}
Clearly, this takes exactly the same form as the optimality equation~(34) 
in \cite{Aal26QS}. Note further that $\tilde \beta \in (0,1)$ and function 
$\tilde \gamma(x)$ is positive and increasing.  Therefore, our claim for 
non-negative values $\nu \ge 0$ follows directly from the parts 
$1^\circ$--$3^\circ$ of the proof of Theorem~1 in \cite{Aal26QS}.

\vskip 6pt
$2^\circ$ 
Now, let $\nu \in (-\infty,0]$. 
We prove that the policy~$\pi$ with activity set 
\[
{\mathcal B}^\pi = {\mathcal S}, 
\]
according to which user~$k$ is scheduled in all states, is 
$(\nu,\beta)$-optimal. 
Due to the comments made just before equation 
(\ref{eq:opt-eqs-discrete-discounted-nu-neg}), 
it remains to prove that policy~$\pi$ is optimal in any state 
$(x,y) \in {\mathcal S} \setminus \{*\}$.

We start the proof by first deriving the value function $V_\beta^\pi(x,y;\nu)$ 
for policy~$\pi$ from the following equations: 
\begin{equation}
\begin{split}
& 
V_\beta^\pi(x,y;\nu) = 
\nu + \gamma(x+1) + \theta \Big( d + \frac{\beta \, \nu}{1 - \beta} \Big) \; + \\
& \quad 
\beta (1 - \theta) \Big( 
\mu(y+1) \frac{\nu}{1 - \beta} + (1 - \mu(y+1)) V_\beta^\pi(x+1,y+1;\nu) 
\Big), \\
& \quad \quad 
(x,y) \in {\mathcal S} \setminus \{*\}.
\end{split}
\label{eq:howard-eqs-discounted-case-2}
\end{equation}
By iterating these equations, we deduce that 
\begin{equation}
V_\beta^\pi(x,y;\nu) = 
\sum_{i=1}^{N^\mu-y} 
\big( \beta (1 - \theta) \big)^{i-1} 
\bar \mu(y,i-1) \big( \gamma(x+i) + \theta d \big) + 
\frac{\nu}{1 - \beta}.
\label{eq:howard-eqs-discounted-case-2-solution}
\end{equation}

Now, let $(x,y) \in {\mathcal S} \setminus \{*\}$. 
By (\ref{eq:howard-eqs-discounted-case-2-solution}), the following 
condition for optimality of $\pi$ in state~$(x,y)$ 
(based on (\ref{eq:opt-eqs-discrete-discounted-nu-neg})), 
\[
\begin{split}
& 
\beta (1 - \theta) V_\beta^\pi(x+1,y;\nu) \; \ge \\
& \quad 
\nu + 
\beta (1 - \theta) \Big( 
\mu(y+1) \frac{\nu}{1 - \beta} + 
(1 - \mu(y+1)) V_\beta^\pi(x+1,y+1;\nu) 
\Big), 
\end{split}
\]
can, by straightforward manipulations, be shown to be equivalent with condition 
\begin{equation}
\nu \le 
\beta (1 - \theta) 
\Big(
\sum_{i=1}^{N^\mu-y} 
\big( \beta (1 - \theta) \big)^{i-1} 
\bar \mu(y,i-1) \mu(y+i) \big( \gamma(x+i+1) + \theta d \big) 
\Big).
\label{eq:nu-req-discounted-case-2}
\end{equation}
However, this condition follows from our assumption that $\nu \le 0$ since 
the right-hand side of (\ref{eq:nu-req-discounted-case-2}) is clearly positive, 
which completes the proof of part~$2^\circ$.

\vskip 6pt
From $1^\circ$ and $2^\circ$ together, we deduce that the relaxed optimization 
problem with objective function (\ref{eq:separable-discrete-discounted-costs}) 
is indexable and the corresponding Whittle index is given by 
(\ref{eq:Whittle-index-discrete-discounted}).
\hfill $\Box$

\section{Whittle index for the continuous-time problem}
\label{sec:whittle-index-continuous}

In this section, we move from discounted to undiscounted costs. In addition, 
we move from the discrete-time setup to the original open version of the 
scheduling problem in continuous time, and determine the Whittle index policy 
in this setting.

Let us first consider undiscounted costs in the discrete-time model. 
Let $W_{k,1}(x,y)$ denote the Whittle index for customer~$k$ in state 
$(x,y) \in {\mathcal S} \setminus \{*\}$ related to the undiscounted costs, 
which is derived from the discounted-cost Whittle index $W_{k,\beta}(x,y)$ 
as follows: 
\[
W_{k,1}(x,y) = \lim_{\beta \to 1} W_{k,\beta}(x,y).
\]
By (\ref{eq:Whittle-index-discrete-discounted}), we have 
\[
W_{k,1}(x,y) = 
\max_{\Delta \in \{1,\ldots,N^\mu_k-y\}} 
\frac{
\sum_{i=1}^{\Delta} (1 - \theta_k)^i \bar \mu_k(y,i-1) \mu_k(y+i) 
\left( \frac{\gamma_k(x+i+1)}{\theta_k} + d_k \right)}
{\sum_{i=1}^{\Delta} (1 - \theta_k)^{i-1} \bar \mu_k(y,i-1)}, 
\]
which can be written as follows: 
\begin{equation}
\begin{split}
& 
W_{k,1}(x,y) = \max_{\Delta \in \{1,\ldots,N^\mu_k-y\}} 
\frac{E\left[ 
\left( \frac{\gamma_k(x + S_k - y + 1)}{\theta_k} + d_k \right) 
1_{\{D^{\mathrm R}_k > S_k - y\}} \, 1_{\{S_k - y \le \Delta\}} 
\mid S_k > y \right]}
{E[\min\{D^{\mathrm R}_k,S_k - y,\Delta\} \mid S_k > y]}, 
\end{split}
\label{eq:Whittle-index-discrete-undiscounted}
\end{equation}
where $D^{\mathrm R}_k$ refers to the remaining abandonment time of the 
customer, which is also geometrically distributed.

When considering the continuous-time model related to the original open 
version of the scheduling problem, the state of a customer in the system 
is described by the pair $(x,y)$ of real values $x \ge y \ge 0$, where 
$x$ denotes the current total time that the customer has already spent 
in the system and $y$ the amount of service that the customer has already 
attained until now. In addition, the holding cost rate function $\gamma_k(x)$ 
is defined for all non-negative real values $x \ge 0$ and assumed to be 
non-negative and increasing so that the accumulated holding costs are convex. 
Note also that, in this continuous-time model, the holding cost rate 
$\gamma_k(x)$ is given per time unit while the abandonment penalty $d_k$ 
is still a lump sum. Moreover, geometric abandonment times are replaced by 
exponentially distributed abandonment times so that $\theta_k$ now refers to 
the abandonment intensity per time unit (instead of the abandonment 
probability per time slot).

The continuous-time counterpart of the Whittle index 
(\ref{eq:Whittle-index-discrete-undiscounted}) is determined by letting the 
time slot shrink down to $0$ while, at the same time, scaling the service completion and abandonment probabilities accordingly together with letting 
$N^\mu_k \to \infty$ and $N^\gamma_k \to \infty$. As the result, the 
continuous-time Whittle index $W_k(x,y)$ for a {class-$k$} customer with 
generally distributed service times $S_k$, exponential abandonment times 
$D_k$, convexly increasing holding costs $\gamma_k(x)$, and abandonment 
penalties~$d_k$ reads as follows: 
\begin{equation}
\begin{split}
& 
W_k(x,y) = \sup_{\Delta > 0} 
\frac{E\left[ 
\left( \frac{\gamma_k(x + S_k - y)}{\theta_k} + d_k \right) 
1_{\{S_k - y < \min\{D^{\mathrm R}_k,\Delta\}\}}
\mid S_k > y \right]}
{E[\min\{D^{\mathrm R}_k,S_k - y,\Delta\} \mid S_k > y]}, 
\quad x \ge y \ge 0,
\end{split}
\label{eq:Whittle-index-continuous-undiscounted}
\end{equation}
where $D^{\mathrm R}_k$ refers to the remaining abandonment time of the 
customer, which is also exponentially distributed.

Next we define two functions for each customer class $k$, which are 
related to both the corresponding service time distribution and the 
cost rate function, 
\begin{equation}
\begin{split}
& H_k(x,y) = 
\frac{E\left[ 
\left( \frac{\gamma_k(x + S_k - y)}{\theta_k} + d_k \right) 
1_{\{S_k - y < D^{\mathrm R}_k\}}
\mid S_k > y \right]}
{E[\min\{D^{\mathrm R}_k,S_k - y\} \mid S_k > y]}, 
\quad x \ge y \ge 0, \\
& 
h_k(x,y) = \left( \frac{\gamma_k(x)}{\theta_k} + d_k \right) \mu_k(y), 
\quad x \ge y \ge 0, 
\end{split}
\label{eq:Hh-functions-continuous-undiscounted}
\end{equation}
where $\mu_k(y)$ is the continuous-time hazard rate function of the service 
time $S_k$. These two functions can be utilized when giving a lower bound 
for the Whittle index (\ref{eq:Whittle-index-continuous-undiscounted}).

\begin{proposition}
\label{prop:W-lower-bound}
For all $x \ge y \ge 0$, 
\begin{equation}
W_k(x,y) \ge \max\{H_k(x,y),h_k(x,y)\}.
\label{eq:W-lower-bound}
\end{equation}
\end{proposition}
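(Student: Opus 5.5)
The plan is to bound the supremum in (\ref{eq:Whittle-index-continuous-undiscounted}) from below by two particular limits of the ratio: $\Delta \to \infty$ and $\Delta \to 0^+$. Denote the ratio inside the supremum by $R_k(x,y;\Delta)$. Then $W_k(x,y) \ge R_k(x,y;\Delta)$ for every $\Delta>0$. Hence it suffices to show
\[
\lim_{\Delta\to\infty} R_k(x,y;\Delta) = H_k(x,y)
\quad\text{and}\quad
\lim_{\Delta\to 0^+} R_k(x,y;\Delta) = h_k(x,y).
\]
Both claims then give lower bounds for the supremum, and together they yield (\ref{eq:W-lower-bound}).

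For $\Delta\to\infty$, I will use monotone convergence in numerator and denominator. The indicator $1_{\{S_k-y<\min\{D^{\mathrm R}_k,\Delta\}\}}$ increases to $1_{\{S_k-y<D^{\mathrm R}_k\}}$. The integrand is nonnegative because $\gamma_k\ge 0$ and $d_k\ge 0$ (or at least $\gamma_k/\theta_k + d_k$ is bounded below), so the numerator converges to the numerator of $H_k$. Likewise $\min\{D^{\mathrm R}_k,S_k-y,\Delta\}$ increases to $\min\{D^{\mathrm R}_k,S_k-y\}$, whose expectation is finite (at most $1/\theta_k$) and positive. Thus the ratio converges to $H_k(x,y)$. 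If the numerator of $H_k$ is infinite, the bound is trivially consistent, since $W_k$ is then infinite as well.

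For $\Delta\to 0^+$, I will do a first-order expansion, conditioning on $S_k>y$ and writing the conditional law of $S_k-y$ with density $f(s)/\bar F(y)$. By independence and the exponential law of $D^{\mathrm R}_k$, the numerator equals
\[
\int_0^\Delta \Big(\tfrac{\gamma_k(x+s)}{\theta_k}+d_k\Big) e^{-\theta_k s}\, \frac{f(y+s)}{\bar F(y)}\, \mathrm{d}s .
\]
Divided by $\Delta$, this tends to $\big(\gamma_k(x^+)/\theta_k + d_k\big)\mu_k(y)$, using that $\mu_k(y)=f(y)/\bar F(y)$ and assuming right-continuity of $\gamma_k$ and of the density at $y$. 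The denominator equals $\int_0^\Delta P\{D^{\mathrm R}_k>s,\ S_k-y>s\mid S_k>y\}\,\mathrm{d}s$, which is $\Delta(1+o(1))$. The ratio therefore tends to $h_k(x,y)$.

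The main obstacle is this small-$\Delta$ limit, because of the regularity it needs. The hazard rate $\mu_k(y)$ must be interpreted as a right limit, and $\gamma_k$ must be right-continuous at $x$; the monotonicity of $\gamma_k$ gives $\gamma_k(x+s)\ge\gamma_k(x)$. I would handle this by avoiding exact limits and taking a $\liminf$ instead. Monotonicity gives a numerator of at least $(\gamma_k(x)/\theta_k+d_k)\int_0^\Delta e^{-\theta_k s}f(y+s)\,\mathrm{d}s/\bar F(y)$, and the denominator is at most $\Delta$. Then $\liminf_{\Delta\to0} R_k \ge h_k(x,y)$ follows wherever $\mu_k(y)$ is the right-hand density limit. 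This one-sided inequality is all the proposition requires.
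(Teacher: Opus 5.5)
Your proposal is correct and follows the paper's own argument: the paper writes $W_k$ as $\sup_{\Delta>0} J(x,y;\Delta)$, identifies $H_k$ and $h_k$ as the limits of $J$ as $\Delta\to\infty$ and $\Delta\to 0$, and concludes immediately. Your monotone-convergence step and your $\liminf$ bound only add the justification of those two limits that the paper asserts without proof; the $\liminf$ version also removes the need for right-continuity of $\gamma_k$.
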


\paragraph{Proof} 
The proof is presented in Appendix.
\hfill $\Box$
\vskip 12pt

Now we give two additional results, which are related to the service time 
distribution classes IHR (i.e., the hazard rate function $\mu_k(y)$ is 
increasíng for all $y$) and DHR (i.e., the hazard rate function $\mu_k(y)$ is 
decreasíng for all $y$), respectively.

\begin{proposition}
\label{prop:IHR}
If $H_k(x+\Delta,y+\Delta)$ is increasing (w.r.t.\ $\Delta$) for all 
$x \ge y \ge 0$ and $\Delta \ge 0$, then 
\begin{equation}
W_k(x,y) = H_k(x,y) 
\quad \hbox{for all $x \ge y \ge 0$}.
\label{eq:IHR}
\end{equation}
In particular, a sufficient condition for this is to require that 
service time $S_k$ has an IHR distribution.
\end{proposition}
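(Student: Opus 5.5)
Throughout, write $R = S_k - y$ conditioned on $S_k > y$, let $D = D^{\mathrm R}_k \sim \mathrm{Exp}(\theta_k)$ be independent of $R$, and set $g(u) = \gamma_k(x+u)/\theta_k + d_k$. With this notation the Whittle index (\ref{eq:Whittle-index-continuous-undiscounted}) is $W_k(x,y) = \sup_{\Delta>0} N(\Delta)/M(\Delta)$, where
\[
N(\Delta) = E\big[g(R)\, 1_{\{R < \min\{D,\Delta\}\}}\big]
= \int_0^\Delta e^{-\theta_k u}\, g(u)\, \bar F(u)\, \mu_k(y+u)\, \mathrm{d}u,
\]
\[
M(\Delta) = E\big[\min\{D,R,\Delta\}\big] = \int_0^\Delta e^{-\theta_k u}\, \bar F(u)\, \mathrm{d}u,
\]
and $\bar F(u) = P\{R>u\}$. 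As $\Delta\to\infty$ the ratio $N(\Delta)/M(\Delta)$ tends to $H_k(x,y)$. Hence the claim $W_k(x,y) = H_k(x,y)$ reduces to showing
\[
N(\Delta)/M(\Delta) \le N(\infty)/M(\infty) \quad \text{for every } \Delta>0 .
\]

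The key step is a renewal-type decomposition at time $\Delta$. Since $R > \Delta$ and $D > \Delta$ jointly with probability $e^{-\theta_k\Delta}\bar F(\Delta)$, and since $D$ is memoryless, we get
\[
N(\infty) - N(\Delta) = e^{-\theta_k\Delta}\,\bar F(\Delta)\, N_{\Delta}(\infty),
\qquad
M(\infty) - M(\Delta) = e^{-\theta_k\Delta}\,\bar F(\Delta)\, M_{\Delta}(\infty).
\]
Here $N_{\Delta}$ and $M_{\Delta}$ are the same quantities computed from the state $(x+\Delta, y+\Delta)$. This works because conditioning on $R>\Delta$ gives residual service $S_k-(y+\Delta)$ given $S_k>y+\Delta$, and $g(\Delta+u) = \gamma_k(x+\Delta+u)/\theta_k + d_k$. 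It follows that
\[
N_{\Delta}(\infty)/M_{\Delta}(\infty) = H_k(x+\Delta, y+\Delta).
\]
Consequently $N(\infty)/M(\infty)$ is a weighted mediant of $N(\Delta)/M(\Delta)$ and $H_k(x+\Delta,y+\Delta)$. By the hypothesis, $H_k(x+\Delta,y+\Delta) \ge H_k(x,y) = N(\infty)/M(\infty)$. The mediant inequality then forces $N(\Delta)/M(\Delta) \le N(\infty)/M(\infty)$, which is what we needed. If $\bar F(\Delta)=0$, then $N(\Delta)=N(\infty)$ and $M(\Delta)=M(\infty)$, so there is nothing to prove. Taking the supremum over $\Delta$ gives $W_k \le H_k$, and together with Proposition~\ref{prop:W-lower-bound} this yields equality.

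For the IHR sufficiency claim, I would show that $\Delta \mapsto H_k(x+\Delta,y+\Delta)$ is increasing. First write
\[
H_k(x,y) = \frac{\int_0^\infty w(u)\, g(u)\, \mu_k(y+u)\, \mathrm{d}u}{\int_0^\infty w(u)\, \mathrm{d}u},
\qquad w(u) = e^{-\theta_k u}\,\bar F(u),
\]
that is, as an average of $g(u)\mu_k(y+u)$ under the probability density proportional to $w$. Shifting the state by $\Delta$ has two effects. It replaces the integrand by $\big(\gamma_k(x+\Delta+u)/\theta_k + d_k\big)\,\mu_k(y+\Delta+u)$, which is pointwise larger and increasing in $u$ because $\gamma_k$ and $\mu_k$ are both increasing and nonnegative. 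It also replaces the weight density by one proportional to $e^{-\theta_k u}\,\bar F_{y+\Delta}(u)/\bar F_{y}(u)$ times the old density. Under IHR this ratio of survival functions is decreasing... in the wrong direction, so I would argue more carefully instead. The exponentially tilted residual life is IHR with hazard $\theta_k + \mu_k(y+u)$, and under IHR the residual life from $y+\Delta$ is stochastically smaller than that from $y$. That points to a stochastic-ordering issue with an increasing integrand.

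The cleaner route is to avoid this ordering argument and verify directly that $W_k = H_k$ under IHR. Under IHR the ratio $N(\Delta)/M(\Delta)$ has derivative with sign equal to
\[
g(\Delta)\,\mu_k(y+\Delta) - N(\Delta)/M(\Delta).
\]
Since $N(\Delta)/M(\Delta)$ is a running average of the increasing function $g(u)\mu_k(y+u)$, it stays below the current value $g(\Delta)\mu_k(y+\Delta)$. So the ratio is increasing in $\Delta$ and its supremum is the limit $H_k$. The same argument applied at the shifted state gives the monotonicity of $H_k(x+\Delta,y+\Delta)$ via the decomposition above, because $H_k$ at $(x,y)$ is then a mediant of values that are all at most $H_k(x+\Delta,y+\Delta)$. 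I expect this IHR sufficiency step, together with handling limits when $\bar F$ vanishes, to be the main obstacle.
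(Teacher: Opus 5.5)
Your proof is correct, and its first half is the paper's argument. The paper's chain of equivalences for $H(x+\Delta,y+\Delta)\ge H(x,y) \Longleftrightarrow H(x,y)\ge J(x,y;\Delta)$ is exactly your mediant decomposition: split $\int_0^\infty=\int_0^\Delta+\int_\Delta^\infty$ and use $\bar G(\Delta+t)=\bar G(\Delta)\bar G(t)$. You only need one direction of it.

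The routes differ in the IHR step. The paper differentiates $H(x+\Delta,y+\Delta)$ in $\Delta$ and asserts, without details, that it increases whenever $h(x+\Delta,y+\Delta)$ does. The way to see this is to set $\phi(s)=h(x+s,y+s)$ and $w(s)=\bar F(y+s)\bar G(s)$. Memorylessness then gives $H(x+\Delta,y+\Delta)=\int_\Delta^\infty\phi w\,\mathrm{d}s\big/\int_\Delta^\infty w\,\mathrm{d}s$, a tail average of an increasing function. This verifies the stated hypothesis directly, and the first half finishes the job.

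You instead show that the head average $J(x,y;\Delta)=\int_0^\Delta\phi w\,\mathrm{d}s\big/\int_0^\Delta w\,\mathrm{d}s$ is nondecreasing in $\Delta$. That yields $W=H$ outright and is the exact mirror of the paper's DHR proof. Diagonal monotonicity of $H$ then follows from your decomposition. The cleanest way to say it: $H(x,y)$ is a convex combination of $J(x,y;\Delta)\le H(x,y)$ and $H(x+\Delta,y+\Delta)$, so the latter is at least $H(x,y)$.

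The paper's route is shorter for the hypothesis as stated. Yours makes the IHR and DHR cases symmetric and does not need the hypothesis at all.

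Incidentally, the \emph{wrong direction} obstruction in your abandoned first attempt comes only from comparing weights at the same residual time $u$. The ratio you wrote there also carries a spurious factor $e^{-\theta_k u}$. If you index by absolute time $s=\Delta+u$ instead, the shifted weight is just $w$ restricted to $s>\Delta$, so no stochastic-ordering argument is needed.
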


\paragraph{Proof} 
The proof is presented in Appendix.
\hfill $\Box$
\vskip 12pt

\begin{proposition}
\label{prop:DHR}
If $h_k(x+\Delta,y+\Delta)$ is decreasing (w.r.t.\ $\Delta$) for all 
$x \ge y \ge 0$ and $\Delta \ge 0$, then 
\begin{equation}
W_k(x,y) = h_k(x,y).
\quad \hbox{for all $x \ge y \ge 0$}, 
\label{eq:DHR}
\end{equation}
In particular, a necessary condition for this is to require that 
service time $S_k$ has a DHR distribution.\footnote{
Note that Proposition~\ref{prop:DHR} is consistent with Equation~(32) in 
\cite{Aal24QS}, where we assumed linear holding costs with $\gamma_k(x) = c_k$ 
for all $x$.}

\end{proposition}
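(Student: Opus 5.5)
We write the ratio inside the supremum defining $W_k(x,y)$ as $N(\Delta)/M(\Delta)$, conditioning throughout on $S_k>y$ and writing $\bar F(t)=P\{S_k-y>t\mid S_k>y\}$, so that $\bar F(t)=\exp(-\int_0^t\mu_k(y+s)\,\mathrm ds)$. Since $D^{\mathrm R}_k$ is independent and exponential with rate $\theta_k$, the numerator and denominator become
\[
N(\Delta)=\int_0^\Delta e^{-\theta_k t}\,\bar F(t)\,h_k(x+t,y+t)\,\mathrm dt,
\qquad
M(\Delta)=\int_0^\Delta e^{-\theta_k t}\,\bar F(t)\,\mathrm dt .
\]
Indeed, the density of $S_k-y$ at $t$ is $\bar F(t)\mu_k(y+t)$, and $(\gamma_k(x+t)/\theta_k+d_k)\mu_k(y+t)=h_k(x+t,y+t)$. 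Thus $N(\Delta)/M(\Delta)$ is a weighted average of $\phi(t):=h_k(x+t,y+t)$ over $t\in[0,\Delta]$, with positive weight $w(t)=e^{-\theta_k t}\bar F(t)$. This representation is the core of the argument, and I would establish it first.

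Under the hypothesis, $\phi$ is decreasing in $t$. A weighted average of a decreasing function over $[0,\Delta]$ is at most $\phi(0)=h_k(x,y)$, because $N(\Delta)\le \phi(0)M(\Delta)$. Hence the supremum is at most $h_k(x,y)$. For the reverse inequality we can either invoke Proposition~\ref{prop:W-lower-bound}, which gives $W_k(x,y)\ge h_k(x,y)$, or argue directly: let $\Delta\to0^+$, where the ratio tends to $\phi(0)$ by right-continuity of $\mu_k$ at $y$, which we assume as usual for hazard rates. Combining the two inequalities yields $W_k(x,y)=h_k(x,y)$.

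For the ``in particular'' claim, suppose the hypothesis holds. We argue by contraposition: if $\mu_k$ increased strictly somewhere, then for a constant $\gamma_k$ the function $h_k(x+\Delta,y+\Delta)$ would be proportional to $\mu_k(y+\Delta)$ and would therefore fail to be decreasing. For general increasing $\gamma_k$, the product $(\gamma_k(x+\Delta)/\theta_k+d_k)\mu_k(y+\Delta)$ can decrease only if $\mu_k(y+\Delta)$ decreases, since the first factor is positive and increasing. More precisely, if $\mu_k(y_1)<\mu_k(y_2)$ for some $y_1<y_2$, take $y=y_1$, $\Delta=y_2-y_1$ and any $x\ge y$. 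Then $h_k(x+\Delta,y_2)\ge(\gamma_k(x)/\theta_k+d_k)\mu_k(y_2)>h_k(x,y_1)$, which contradicts the hypothesis. Hence $\mu_k$ is decreasing, so $S_k$ is DHR.

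The main obstacle is the rigorous justification of the integral representation of $N(\Delta)$, in particular handling the strict inequality $S_k-y<\min\{D^{\mathrm R}_k,\Delta\}$ and the existence of the hazard rate. I would simply assume that $S_k$ has a density, as the definition of $h_k$ already requires. Once this representation is in place, the remaining steps are elementary.
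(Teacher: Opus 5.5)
Your proof is correct and follows the paper's route: the same weighted-average representation of $J(x,y;\Delta)$ with weights $\bar F(y+t)\bar G(t)$, and the same factorization $h_k=(\gamma_k/\theta_k+d_k)\mu_k$ for the DHR claim. You spell out that DHR step more carefully than the paper does, making explicit the positivity of the first factor that the paper leaves implicit.

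The only difference is how you close the argument. You bound $J(x,y;\Delta)\le h_k(x,y)$ directly and take the reverse inequality from Proposition~\ref{prop:W-lower-bound}. The paper instead uses the derivative formula to show that $J(x,y;\Delta)$ is decreasing in $\Delta$. Your version is slightly more elementary, since it needs no differentiability in $\Delta$.
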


\paragraph{Proof} 
The proof is presented in Appendix.
\hfill $\Box$
\vskip 12pt

Let us now finally define the Whittle index policy in this continuous-time setting.

\begin{definition}
\label{def:Whittle-index-policy}
Consider the original continuous-time scheduling problem with $N$ homogeneous 
parallel servers, convex holding costs, and exponentially distributed 
abandonment times. At any time~$t$, the Whittle index policy chooses to serve 
\begin{itemize}
\item[(i)]
all the customers, 
if there are at most $N$ customers in the system; 
\item[(ii)]
those $N$ customers that have the highest Whittle indexes $W_k(x,y)$ 
given in (\ref{eq:Whittle-index-continuous-undiscounted}), 
if there are more than $N$ customers in the system.
\end{itemize}
\end{definition}

\section{Some concluding remarks}
\label{sec:conc}

In this paper, we applied the Whittle index approach to the dynamic scheduling 
problem in a multi-class M/G/$N$ + M queue with convex delay costs and 
impatient customers that have exponential abandonment times. This last 
assumption allowed us make a very straightforward use of the results of the 
corresponding problem with patient customers presented in \cite{Aal26QS}.

As for the case with customers that have non-exponential abandonment times, 
it is no longer possible to directly utilize the results in \cite{Aal26QS}. 
In a recent preprint \cite{Aal26QS-preprint}, we study the dynamic scheduling 
problem in a multi-class M/G/$N$ + G queue and demonstrate how to apply the 
Whittle index approach when delay costs are linear (i.e., less general than 
convex) and abandonment times have IHR distributions (i.e., more general 
than exponential).

\newpage



\section*{Appendix}

This appendix gives the proofs of 
Propositions~\ref{prop:W-lower-bound}-\ref{prop:DHR}. 
Therefore, we consider a single class-$k$ customer in the same continuous-time 
setup as in the latter part of Section~\ref{sec:whittle-index-continuous}. 
The related subscript~$k$ is again left out to lighten the notation.

Assume that the cost rate function $\gamma(x)$ is a non-negative and increasing 
function of $x$ so that the accumulated holding costs are convex. In addition, 
let $f(y)$, $\bar F(y)$, and $\mu(y)$ denote, respectively, the density 
function, the tail distribution function, and the hazard rate function of 
the generally distributed service time $S$, 
\[
\bar F(y) = P\{ S > y \} = 1 - \int_0^y f(t) \, \mathrm{d}t, \quad 
\mu(y) = \frac{f(y)}{\bar F(y)}, \quad y \ge 0.
\]
In addition, let $\bar G(x)$ denote the tail distribution function of the 
exponentially distributed abandonment time $D$ (as well as the remaining 
abandonment time $D^{\mathrm R}$), 
\[
\bar G(x) = P\{ D > x \} = P\{ D^{\mathrm R} > x \} = e^{-\theta x}, \quad x \ge 0.
\]
Note also that 
\[
\bar G(x+\Delta) = \bar G(x) \bar G(\Delta), \quad x, \Delta \ge 0.
\]

The efficiency function $J(x,y;\Delta)$ related to the service time 
distribution is defined, for any $x \ge y \ge 0$ and $\Delta > 0$, by 
\begin{equation}
\begin{split}
J(x,y;\Delta) 
& = 
\frac{E\left[ 
\left( \frac{\gamma(x + S - y)}{\theta} + d \right) 
1_{\{S - y < \min\{D^{\mathrm R},\Delta\}\}}
\mid S > y \right]}
{E[\min\{D^{\mathrm R},S - y,\Delta\} \mid S > y]} \\
& = 
\frac{\int_0^\Delta \left( \frac{\gamma(x+t)}{\theta} + d \right) f(y+t) \, \bar G(t) \, \mathrm{d}t}
{\int_0^\Delta \bar F(y+t) \, \bar G(t) \, \mathrm{d}t}.
\end{split}
\label{eq:app-J-function}
\end{equation}
It follows that the functions $W(x,y)$, $H(x,y)$, and $h(x,y)$ 
defined in Section~\ref{sec:whittle-index-continuous} 
(see Equations~(\ref{eq:Whittle-index-continuous-undiscounted}) and 
(\ref{eq:Hh-functions-continuous-undiscounted})) can be expressed as follows: 
\begin{equation}
W(x,y) = \sup_{\Delta > 0} J(x,y;\Delta), \quad 
H(x,y) = \lim_{\Delta \to \infty} J(x,y;\Delta), \quad 
h(x,y) = \lim_{\Delta \to 0} J(x,y;\Delta).
\label{eq:app-WHh-functions}
\end{equation}
In addition, by (\ref{eq:Hh-functions-continuous-undiscounted}), we have 
\begin{equation}
J(x,y;\Delta) = 
\frac{\int_0^\Delta h(x+t,y+t) \, \bar F(y+t) \, \bar G(t) \, \mathrm{d}t}
{\int_0^\Delta \bar F(y+t) \, \bar G(t) \, \mathrm{d}t}.
\label{eq:app-J-function-alt}
\end{equation}

Now we ready to prove Propositions~\ref{prop:W-lower-bound}-\ref{prop:DHR}.

\subsection*{\em Proof of Proposition~\ref{prop:W-lower-bound}}
\label{subsec:app-proof-lower-bound}
Proposition~\ref{prop:W-lower-bound} claims that 
\begin{equation}
W(x,y) \ge \max\{H(x,y),h(x,y)\}.
\label{eq:app-proof-lower-bound}
\end{equation}
This claim follows immediately from (\ref{eq:app-WHh-functions}).
\hfill $\Box$

\subsection*{\em Proof of Proposition~\ref{prop:IHR}}
\label{subsec:app-proof-IHR}
Proposition~\ref{prop:IHR} claims that, if $H(x+\Delta,y+\Delta)$ is increasing 
w.r.t.\ $\Delta$, then 
\begin{equation}
W(x,y) = H(x,y).
\label{eq:app-IHR}
\end{equation}
This claim follows clearly from (\ref{eq:app-WHh-functions}) and the following 
result, which is proved below: 
\begin{equation}
H(x+\Delta,y+\Delta) \ge H(x,y) \Longleftrightarrow 
H(x,y) \ge J(x,y;\Delta).
\label{eq:app-IHR-1}
\end{equation}
By (\ref{eq:app-WHh-functions}) and (\ref{eq:app-J-function-alt}), we have 
\[
\begin{split}
&
H(x+\Delta,y+\Delta) \ge H(x,y) \\
& \Longleftrightarrow \quad 
\frac{\int_0^\infty h(x+\Delta+t,y+\Delta+t) \, \bar F(y+\Delta+t) \, \bar G(\Delta+t) \, \mathrm{d}t}
{\int_0^\infty \bar F(y+\Delta+t) \, \bar G(\Delta+t) \, \mathrm{d}t} \ge 
\frac{\int_0^\infty h(x+t,y+t) \, \bar F(y+t) \, \bar G(t) \, \mathrm{d}t} 
{\int_0^\infty \bar F(y+t) \, \bar G(t) \, \mathrm{d}t} \\
& \Longleftrightarrow \quad 
\left( 
\int_0^\infty h(x+\Delta+t,y+\Delta+t) \, \bar F(y+\Delta+t) \, \bar G(\Delta+t) \, \mathrm{d}t
\right) 
\left( 
\int_0^\infty \bar F(y+t) \, \bar G(t) \, \mathrm{d}t
\right) \; \ge \\
& \quad \quad \quad \quad \quad 
\left( 
\int_0^\infty h(x+t,y+t) \, \bar F(y+t) \, \bar G(t) \, \mathrm{d}t
\right) 
\left( 
\int_0^\infty \bar F(y+\Delta+t) \, \bar G(\Delta+t) \, \mathrm{d}t
\right) \\
& \Longleftrightarrow \quad 
\left( 
\int_0^\infty h(x+\Delta+t,y+\Delta+t) \, \bar F(y+\Delta+t) \, \bar G(\Delta+t) \, \mathrm{d}t
\right) 
\left( 
\int_0^\infty \bar F(y+t) \, \bar G(t) \, \mathrm{d}t
\right) \; \ge \\
& \quad \quad \quad \quad \quad 
\left( 
\int_0^\infty h(x+t,y+t) \, \bar F(y+t) \, \bar G(t) \, \mathrm{d}t
\right) 
\left( 
\int_0^\infty \bar F(y+t) \, \bar G(t) \, \mathrm{d}t - 
\int_0^\Delta \bar F(y+t) \, \bar G(t) \, \mathrm{d}t
\right) \\
& \Longleftrightarrow \quad 
\left( 
\int_0^\infty h(x+t,y+t) \, \bar F(y+t) \, \bar G(t) \, \mathrm{d}t
\right) 
\left( 
\int_0^\Delta \bar F(y+t) \, \bar G(t) \, \mathrm{d}t
\right) \; \ge \\
& \quad \quad \quad \quad \quad 
\left( 
\int_0^\infty h(x+t,y+t) \, \bar F(y+t) \, \bar G(t) \, \mathrm{d}t \; - 
\right. \\
& \quad \quad \quad \quad \quad \quad \quad 
\left. 
\int_0^\infty h(x+\Delta+t,y+\Delta+t) \, \bar F(y+\Delta+t) \, \bar G(\Delta+t) \, \mathrm{d}t 
\right) 
\left( 
\int_0^\infty \bar F(y+t) \, \bar G(t) \, \mathrm{d}t 
\right) \\
& \Longleftrightarrow \quad 
\left( 
\int_0^\infty h(x+t,y+t) \, \bar F(y+t) \, \bar G(t) \, \mathrm{d}t
\right) 
\left( 
\int_0^\Delta \bar F(y+t) \, \bar G(t) \, \mathrm{d}t
\right) \; \ge \\
& \quad \quad \quad \quad \quad 
\left( 
\int_0^\Delta h(x+t,y+t) \, \bar F(y+t) \, \bar G(t) \, \mathrm{d}t
\right) 
\left( 
\int_0^\infty \bar F(y+t) \, \bar G(t) \, \mathrm{d}t 
\right) \\
& \Longleftrightarrow \quad 
\frac{\int_0^\infty h(x+t,y+t) \, \bar F(y+t) \, \bar G(t) \, \mathrm{d}t} 
{\int_0^\infty \bar F(y+t) \, \bar G(t) \, \mathrm{d}t} 
\ge 
\frac{\int_0^\Delta h(x+t,y+t) \, \bar F(y+t) \, \bar G(t) \, \mathrm{d}t} 
{\int_0^\Delta \bar F(y+t) \, \bar G(t) \, \mathrm{d}t} \\
& \Longleftrightarrow \quad 
H(x,y) \ge J(x,y;\Delta),
\end{split}
\]
which justifies (\ref{eq:app-IHR-1}).

By applying (\ref{eq:app-WHh-functions}) and (\ref{eq:app-J-function-alt}) and taking the derivative of $H(x+\Delta,y+\Delta)$ w.r.t.\ $\Delta$, where 
\[
H(x+\Delta,y+\Delta) = 
\frac{\int_0^\infty h(x+\Delta+t,y+t) \, \bar F(y+\Delta+t) \, \bar G(t) \, \mathrm{d}t}
{\int_0^\infty \bar F(y+\Delta+t) \, \bar G(t) \, \mathrm{d}t}, 
\]
it is easy to show that $H(x+\Delta,y+\Delta)$ is an increasing function of 
$\Delta$ whenever $h(x+\Delta,y+\Delta)$ is such. Thus, the other claim of 
Proposition~\ref{prop:IHR} follows from our assumption that the holding cost 
rate function $\gamma(x)$ is non-negative and increasing for all~$x$. 
\hfill $\Box$

\subsection*{\em Proof of Proposition~\ref{prop:DHR}}
\label{subsec:app-proof-DHR}
Proposition~\ref{prop:DHR} claims that, if $h(x+\Delta,y+\Delta)$ is 
decreasing w.r.t.\ $\Delta$, then 
\begin{equation}
W(x,y) = h(x,y). 
\label{eq:app-proof-DHR}
\end{equation}
Note first that, by (\ref{eq:app-J-function-alt}), we have 
\begin{equation}
\frac{\partial}{\partial \Delta} J(x,y;\Delta) = 
\frac{\bar F(y+\Delta) \, \bar G(\Delta) 
\big( h(x+\Delta,y+\Delta) - J(x,y;\Delta) \big)}
{\int_0^\Delta \bar F(y+t) \, \bar G(t) \, \mathrm{d}t}.
\label{eq:app-proof-DHR-1}
\end{equation}
On the other hand, if $h(x+\Delta,y+\Delta)$ is decreasing w.r.t.\ $\Delta$, 
it follows from (\ref{eq:app-J-function-alt}) that 
\begin{equation}
J(x,y;\Delta) = 
\frac{\int_0^\Delta h(x+t,y+t) \, \bar F(y+t) \, \bar G(t) \, \mathrm{d}t}
{\int_0^\Delta \bar F(y+t) \, \bar G(t) \, \mathrm{d}t} \ge 
h(x+\Delta,y+\Delta).
\label{eq:app-proof-DHR-2}
\end{equation}
Thus, by (\ref{eq:app-proof-DHR-1}), $J(x,y;\Delta)$ is decreasing 
w.r.t.\ $\Delta$ under this condition, which justifies our claim 
(\ref{eq:app-proof-DHR}) due to (\ref{eq:app-WHh-functions}).

Since 
\[
h(x+\Delta,y+\Delta) = \left( \frac{\gamma(x+\Delta)}{\theta} + d \right) \mu(y+\Delta) 
\] 
by (\ref{eq:Hh-functions-continuous-undiscounted}), the other claim of 
Proposition~\ref{prop:DHR} follows from our assumption that the holding 
cost rate function $\gamma(x)$ is non-negative and increasing for all~$x$.
\hfill $\Box$

\newpage


\bibliographystyle{plain} 
\bibliography{whittle-abandonment-convex} 

\end{document}